\newcommand\myshade{100}
\colorlet{mylinkcolor}{BrickRed}
\colorlet{mycitecolor}{NavyBlue}
\colorlet{myurlcolor}{Aquamarine}
\newtheorem{lemma}{\bf Lemma}[section]
\newcommand\PI[1]{\ensuremath{I_{\partial}^{#1}}\xspace}
\newcommand\IR[1]{\ensuremath{I_{\cap}^{#1}}\xspace}
\newcommand\bX{\ensuremath{\bm X_t}\xspace}
\newcommand{\meet}{\mathrel{\text{\raisebox{0.25ex}{\scalebox{0.8}{$\wedge$}}}}}
\newcommand{\splitatcommas}[1]{%
\begingroup
\begingroup\lccode`~=`, \lowercase{\endgroup
    \edef~{\mathchar\the\mathcode`, \penalty0 \noexpand\hspace{0pt plus 1em}}%
    }\mathcode`,="8000 #1%
    \endgroup
}
\begin{document}

\title{Improved estimators of causal emergence for large systems}%

\author{Madalina~I. Sas}
\affiliation{Centre for Complexity Science, Imperial College London}

\author{Fernando~E. Rosas}
\affiliation{Centre for Complexity Science, Imperial College London}
\affiliation{Centre for Psychedelic Research, Department of Brain Sciences, Imperial College London}
\affiliation{Sussex AI and Centre for Consciousness Science, Department of Informatics, University of Sussex}
\affiliation{Centre for Eudaimonia and Human Flourishing, University of Oxford}

\author{Hardik Rajpal}
\affiliation{Centre for Complexity Science, Imperial College London}
\affiliation{I-X Centre for AI in Science, Imperial College London}
\affiliation{Department of Electrical and Electronic Engineering, Imperial College London, London}

\author{Daniel Bor}
\affiliation{Department of Psychology, University of Cambridge}
\affiliation{Department of Psychology, Queen Mary University of London}

\author{Henrik~J. Jensen}
\affiliation{Centre for Complexity Science, Imperial College London}
\affiliation{Department of Mathematics, Imperial College London}

\author{Pedro~A.M. Mediano}
\affiliation{Department of Computing, Imperial College London}
\affiliation{Division of Psychology and Language Sciences, University College London}

\begin{abstract}

\noindent
A central challenge in the study of complex systems is the quantification of emergence --- understood as the ability of the system to exhibit collective behaviours that cannot be traced down to the individual components. 
While recent work has proposed practical measures to detect emergence, 
these approaches tend to double-count the contribution 
of shared components, which substantially hinders their capability to effectively study large systems. %
In this work, we introduce a family of improved information-theoretic measures of emergence that %
iteratively correct for double-counted terms. %
Our approach is computationally efficient and provides a controllable trade-off between computational load and sensitivity, leading to more accurate and versatile estimates of emergence.
The benefits of the proposed approach are demonstrated by successfully detecting emergence in both simulated and real-world data related to flocking behaviour.

\end{abstract}

\maketitle

\section{Introduction}

\noindent Emergence is one of the hallmarks of complex systems, %
where %
numerous interacting parts give rise to collective behaviour that could not be predicted by examining the parts in isolation~\cite{jensen2022complexity}.
Frameworks and approaches have been proposed to investigate emergence in various disciplines, including phase transitions in statistical physics~\cite{butterfield2012emergence}, 
cellular automata as a computational theory~\cite{mitchell1996evolving}, 
and the Schelling model in social sciences~\cite{schelling1971dynamic}. 
The study of emergence is also central to fundamental questions in biology, such as the origins of life~\cite{cohen2007explaining}, evolution~\cite{corning2012re}, and the neural basis of consciousness~\cite{broad1925}. 

Recent advances have enabled information-theoretic approaches to measuring emergence, allowing for the identification of emergent features in stochastic dynamical systems
~\cite{seth2010measuring,klein2020emergence,rosas2020reconciling,barnett2021dynamical,varley2022emergence,rosas2024software,yuan2024emergence}. 
These methods leverage information theory to characterise temporal 
interdependencies in the system across scales.
In particular, the \textit{mereological theory of causal emergence} formalises emergence through part-whole interactions~\cite{rosas2020reconciling,mediano2022greater}, interrogated through tools from the partial information decomposition (PID) framework~\cite{williams2010nonnegative,mediano2021towards}. 
This approach is particularly well-suited to quantify to what degree the `whole is greater than the sum of the parts' -- an identifying characteristic of emergent phenomena~\cite{anderson1972more,jensen2022complexity}.

Empirical applications of this framework to study emergence are based on a computationally tractable estimate known as $\Psi$. %
This metric assesses the difference between the information provided by a \textit{macroscopic} (systemic) property about its own future, and the sum of the information about it provided by each of its parts. 
This kind of ``whole-minus-sum'' measure is structurally similar to the synergy-redundancy index~\cite{gat1999synergy} and the O-information~\cite{rosas2019quantifying}, which characterise multivariate interdependencies through differences of mutual information terms -- a family of measures recently described as \textit{Shannon invariants}~\cite{gutknecht2025shannon}. 
This metric has been useful for a range of empirical investigations, including the study of gene regulatory networks~\cite{pigozzi2025associative}, the dynamics of the human brain~\cite{luppi2021reduced}, the internal dynamics of reservoir computing~\cite{tolle2024evolving}, and the formation of useful internal representations in machine learning~\cite{mcsharry2024learning}.

Despite their strengths, these measures tend to overestimate the contribution of individual components by counting shared information among them multiple times -- a feature that substantially hinders the sensitivity of the metric as the number of components grows. 
As a result, %
these measures tend to yield negative findings in large systems even when emergent phenomena are present. This is especially relevant for biological complex systems, where redundancy is to be expected alongside synergy for its functional role promoting robustness against uncertainty~\cite{krakauer2002redundancy,durant2019role,luppi2024information}. Hence, overestimating shared information can be particularly detrimental for investigating emergent phenomena in such systems.

To address this important limitation, %
we propose a new set of measures of causal emergence that retain sensitivity when applied to large systems. 
Continuing a tradition started with Shannon's work on information lattices~\cite{shannon_1953}, %
these new estimators are based on a novel method named \textit{lattice expansion}, which let us bridge between whole-minus-sum measures~\cite{rosas2019quantifying,rosas2025characterising}, which are fast but coarse, and the full partial information decomposition, which is precise but computationally demanding. 
The resulting measures improve the accuracy of existing estimators of emergence by iteratively correcting for the double-counting of shared information, 
thus allowing a more robust detection of emergence. %
The benefits of this approach are illustrated in case studies involving synthetic and real-world data. 
Overall, the generality of the method for correcting double-counted terms makes it suitable for application to other Shannon-invariant measures, opening the door to a wide range of improved estimators of multivariate interdependence.

\section{Technical background}

\noindent We consider a complex system composed by $n$ sub-components, which are measured regularly over timepoints $t\in \mathbb{N}$. The results of those measurements are the observables $\bm X_t =
(X_t^1,\dots,X_t^n)$, with $X_t^i\in \mathcal{X}_i$ corresponding to the state
of the $i$\textsuperscript{th} part at time $t$ in phase space
$\mathcal{X}_i$. We use $[n] \coloneqq \{1,\dots,n\}$ to refer to the indices of all parts.

Components of the system can be grouped in subsets $\alpha \in \mathcal A$, so that $\bm X_t^{a} = (X_t^{i_1},..., X_t^{i_K})$ represents the observables at time $t$ for a subset of components with indices in $a = \{i_1,\dots,i_K\}\subset [n]$. 
$\mathcal A$ denotes the powerset, or all the possible sets of combinations of parts.
Finally, let $\mathcal S^{(k)}$ represent the set of sets with more than $k$ parts: $\mathcal S^{(k)} = \{ \{\alpha_1, \dots, \alpha_L\} \subset \mathcal A : \min_j |a_j| > k \}$.

We consider two time points $t < t'$
and systemic macroscopic features $V_t\in \mathcal{V}$ that are \emph{supervenient} on the underlying system, and we formalise this condition by requiring $V_t$ to be statistically independent
of $\bm X_{t'}$ given $\bm X_t$ 
for all $t'> t$. This includes as
particular cases deterministic functions $V_t = F(\bm X_t)$, as well as aggregate properties %
such as coarse-grainings.
The information-theoretic emergence framework introduced by \cite{rosas2020reconciling} proposes measures and criteria to test whether $V$ is emergent and in what way using PID, which is briefly presented next. 

\subsection{Information decomposition}
\label{sec:pid}

\noindent At the core of information theory, the mutual information (MI) introduced by Shannon~\cite{shannon_1953} captures the extent to which knowing about one set of variables reduces uncertainty about another set. 
However, complex systems are often characterised by interactions which extend beyond pairs, and MI alone cannot attribute which variables within the set are responsible for the reduction in uncertainty.
To address this, the Partial Information Decomposition (PID) framework~\cite{williams2010nonnegative} proposes a decomposition of MI into different kinds of \textit{information atoms}, according to how multiple \textit{source} variables $X_i$ provide information about a \textit{target} $Y$. In the special case of two source variables, PID decomposes MI into four terms: 
\begin{align}\label{eqn:pid}
I(X_1, X_2; Y) = &~  \texttt{Red}(X_1, X_2; Y ) + \texttt{Un}(X_1; Y |X_2) \notag \\
& + \texttt{Un}(X_2; Y |X_1) + \texttt{Syn}(X_1, X_2; Y ) ~ ,
\end{align}
where
$\texttt{Un}(X_i;V)$ denotes \textit{unique} information provided by $X_i$,
$\texttt{Red}(X_1,X_2;V)$ is the \textit{redundant} information shared between sources, and 
$\texttt{Syn}(X_1,X_2;V)$ is the \textit{synergistic} information, present when sources are taken together, but not separately.

The decomposition of mutual information for more than two sources creates a hierarchy of information which can be expressed with the formalism of a \textit{redundancy lattice}, which captures a \textit{partial ordering} between information atoms, where a higher element provides at least as much redundant information as a lower one.
Generally, an atom or set of atoms $\alpha_1$ is less informative than another $\alpha_2$, if the information it contains is included in $\alpha_2$. We denote this as $\alpha_1\prec\alpha_2$. Conversely, two atoms $\alpha_1$ and $\beta$ may both be less informative than $\alpha_2$, but may not be comparable to each other (hence, the \textit{partial} ordering).

This allows us to construct a redundancy function $\IR{\bm\alpha}$, as the sum of all the preceding (lower) partial information atoms $\bm\beta$ in the lattice:
\begin{align}
  \IR{\bm\alpha}(\bX; Y) = \sum_{\bm\beta\preceq\bm\alpha} \PI{\bm\beta}(\bX; Y) 
  \label{eq:redpartial}
\end{align}

For instance, for three sources (see Fig.~\ref{fig:lattice}), the information provided in common by all three (represented by the atom $\{\{1\}\{2\}\{3\}\}$) is at the bottom of the lattice, as it is included in the information common to pairs of variables (e.g.~$\{\{1\}\{2\}\}$), which is included in the information within each variable. %
However, the number of atoms grows exponentially with the number of sources, making the estimation of information atoms computationally difficult.

Practically estimating information atoms relies on selecting a function for either synergy or redundancy. Most of the time, in practice, a redundancy function is used to compute the synergy. A broad range of such functions have been proposed in the literature~\cite{barrett2015exploration, harder2013bivariate, bertschinger2014quantifying, ince2017measuring, griffith2014quantifying, james2018unique}, see~\cite{lizier2018information} for a review on the topic. 

\subsection{Emergence from part-whole relationships}
\label{sec:theory}

\noindent The PID formalism is particularly well-suited to the use case of measures of emergence, as the sources can represent the microscopic, part-level variables, while the target can correspond to a macroscopic, system-level feature. The decomposition above allows the estimation of multivariate mutual information terms, such as $I(\bm X; V)$ for a system with many variables $X_i$, which would quantify the total information that the parts contribute towards a system property $V$, and could be a marker of $V$ being emergent if any synergistic information is present.

Moreover, by applying mutual information with a fixed time delay between two time series, the time-delayed mutual information (TDMI) can be used as a measure of predictability, allowing us to quantify how the past states -- micro or macro -- of a system may predict its future states.
When applied to the TDMI, the PID atoms acquire valuable interpretations which can contribute to understanding how information is stored, copied and transferred within the system over time~\cite{lizier2012local,mediano2019beyond}. 
These interpretations form the basis of the %
theory of causal emergence in~\cite{rosas2020reconciling} underlying the measures discussed in this paper.
We briefly recapitulate relevant measures and criteria below.

\textbf{Causal emergence}, quantified by $\Psi$, refers to the property of a system feature to be irreducible to the sum contributions of its parts: %
\begin{align}
    \Psi^{(k)}_{t,t'}(V)   & := I(V_t; V_{t'}) - \sum_{|\alpha|=k} I(\textbf X^{\alpha}_t; V_{t'}). \label{eq:psi}
\end{align}
Outside the context of this theory, $\Psi^{(1)}$ corresponds to
the well-known redundancy-synergy index~\cite{gat1999synergy}.%

\textbf{Downward causation}, quantified by $\Delta$, refers to a system where a macro feature has a causal effect over $k$ particular agents, but this effect cannot be attributed to any other individual component or group:
\begin{align}\label{eq:delta}
    \Delta^{(k)}_{t,t'}(V) & :=  \max_{|\alpha|=k} \left( I(V_t; \textbf X^{\alpha}_{t'}) - \sum_{|\beta|=k} I (\textbf X_t^{\beta}; \textbf X_{t'}^{\alpha}) \right).
\end{align}

\textbf{Causal decoupling}, quantified by $\Gamma$, refers to a system where a macro feature can
predict its own evolution, but no component or group of components may predict the
evolution of any other:

\begin{align}
    \Gamma^{(k)}_{t,t'} = \max_{|\alpha|=k} I(V_t, \textbf X_{t'}^\alpha).
\end{align}

The associated criteria for emergence are given by:

\begin{enumerate}
    \item[(1)]  $\Psi_{t,t'}^{(k)} (V) > 0 $ is sufficient for causal emergence.
    \item[(2)]  $\Delta^{(k)}_{t,t'} (V) > 0 $  is sufficient for downward causation.
    \item[(2)]  $\Gamma_{t,t'} = 0$ and  $\Psi_{t,t'}^{(k)} (V) > 0 $ is sufficient for causal decoupling.
\end{enumerate}

\section{Improved estimators of causal emergence}

\subsection{Motivation}

\noindent As outlined in Sec.~\ref{sec:theory}, the theory of causal emergence in~\cite{rosas2020reconciling}
provides practical criteria (e.g.\ $\Psi$) to detect causal emergence in
experimental or simulated data. Unfortunately, like other Shannon invariants, 
$\Psi$ is limited as it overestimates the contribution of the parts.
The negative sum of the marginal mutual informations in
Eq.~\eqref{eq:psi} contains multiple terms of redundant information, which are
``double-counted'' and subtracted in excess from $I(V_t; V_{t'})$. Thus, it
results in an estimator of ``synergy minus redundancy'', which may fail to
detect emergence if the double-counted redundancy exceeds
synergy.

\begin{figure*}[ht]
  \centering
  \includegraphics{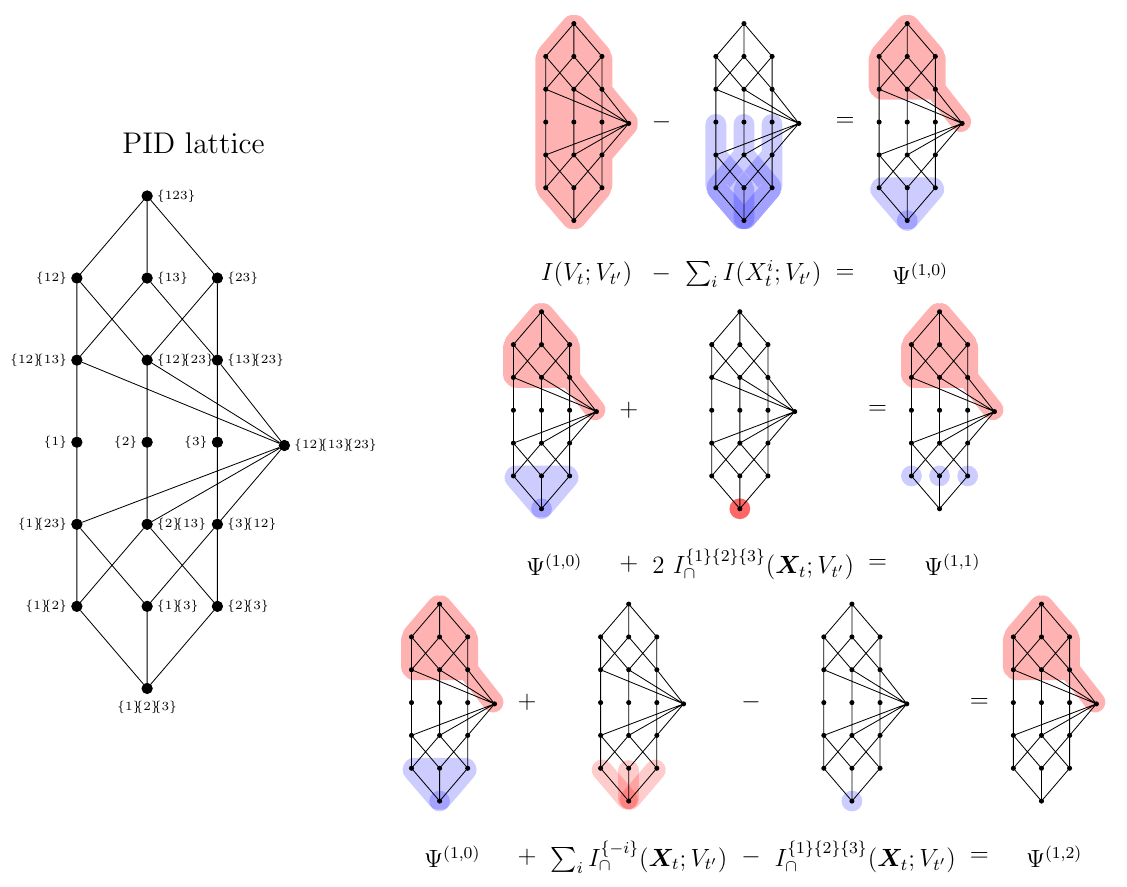}

  \caption{\textbf{Graphical illustration of lattice expansion for a system of $n = 3$ variables}. 
  Lattice nodes representing information atoms that get added together are marked in red, while the ones subtracted are marked in blue. Intuitively, the information provided by each of the three micro variables towards the macro includes the redundant information they share. When summing these mutual information quantities into a ``sum-of-parts" term that gets subtracted from the ``whole'', this redundancy is included three times rather than once. The first-order lattice expansion re-adds the redundancy shared between all components, while the second-order expansion corrects for redundant information among groups of two. In a system of three variables, only two orders of expansion are necessary.}

  \label{fig:lattice}
\end{figure*}

Intuitively, subtracting the mutual information from each source to the target will also subtract the redundant information they all provide together or as sub-groups of varying sizes.
To better illustrate this, we present the example of three source variables, visualised with PID lattices, in Fig.~\ref{fig:lattice}.  Recall redundant information in each node in the lattice, $\IR{\bm\alpha}$, includes the partial information provided by all the nodes $\bm\beta$, with $\bm\beta\preceq\bm\alpha$, below it. 
For $k=1$ and $q=0$, we subtract all atoms below and including \ensuremath{\{1\}, \{2\}, \{3\}}.
By inspecting the lattice, terms $\{1\}\{23\}$, $\{2\}\{13\}$ and $\{3\}\{12\}$ contain the redundant information of all atoms below. 
Thus when subtracting these terms, the redundancy of atoms one level below ($\{1\}\{2\}$, $\{1\}\{3\}$ and $\{2\}\{3\}$, with two singleton sets) is counted twice, so double-counted once, and the redundancy two levels below ($\{1\}\{2\}\{3\}$, with three singleton sets) is counted three times, so double-counted twice.

More generally, each subtracted term from the bottom half of the lattice is counted one time more than the term above. 
This is equivalent to the observation that any given atom $\bm\alpha$ in the lower half of the lattice, when containing, say, $k$ \textit{singleton} sets composed of a single variable (such as $\{1\}$), would be counted by a factor equal to the number of singleton sets minus 1, i.e.~$k-1$ times.

To counteract this double-counting and obtain a finer-grained measure, the main idea is to start re-adding redundancies from the bottom of the lattice to the whole-minus-sum estimator.

\subsection{A family of corrected measures}

\noindent Building on the ideas outlined above, we now outline a family of
more sensitive estimators obtained through a technique we call the
\emph{lattice expansion}. Intuitively, lattice expansion is a broadly applicable technique to refine Shannon-invariant measures, where we start from the coarse ``whole-minus-sum'' measures and then progressively add back the redundancy information atoms by expanding the redundancy lattice. Here, we illustrate the lattice expansion for the criterion for causal emergence $\Psi$.

The goal is to define a family of measures $\Psi^{(k,q)}$, where $q
= 0, \dots, n-1$ is the approximation order, such that $\Psi^{(k,0)} =
\Psi^{(k)}$ and $\Psi^{(1,n-1)}$ are equivalent to the first-order synergy in the system (see Appendix for more details and proofs). 

Knowing intuitively the source of the double-counted terms, we use information lattices to visualise the partial ordering between information atoms and formulate the full lattice expansion :
\begin{align}
  \Psi^{(k,q)} = \Psi^{(k)} + \sum_{r=n-q+1}^n \sum_{\substack{\bm\alpha\in\mathcal{M}^n\\|\bm\alpha| = r}} C^n_{q,r} \IR{\bm\alpha}(\bm X_t; V_{t'}) ~ ,
  \label{eq:expansion}
\end{align}
with the coefficients $C^n_{q,r}$ defined recursively, starting with  $C^n_{1,n} = n-1$:

\begin{align}
  C^n_{q,r} &= r-1 - \sum_{s=n-q+1}^{r-1} C^n_{q,s} \binom{r}{s} \\
  C^n_{q,n-q+1} &= n-q,
\end{align}
where $n$ is the system size, $q \in [1, n-1]$ the order of the expansion,
$r \in [n-q+1, n]$ the cardinality of the set of sources being considered, 
and $\binom{r}{s}$ the binomial operator, which returns the number of unordered subsets of size $r$ in a collection of $s$ variables.

As an example, the first and second lattice expansions yield a better estimation of $\Psi^{(k)}$:
\begin{align*}
  \Psi^{(k,1)} = \Psi^{(k,0)} & + (n-1) \IR{\{1\}...\{n\}}(\bm X_t; V_{t'})
\end{align*}
\begin{align*}
  \Psi^{(k,2)} = \Psi^{(k,0)} & + (n-2) \sum_{i=1}^n \IR{\{-i\}}(\bm X_t; V_{t'}) \\
                              & + \left(n-1 - n(n-2)\right) \IR{\{1\}...\{n\}}(\bm X_t; V_{t'})
\end{align*}
When computing $\Psi$, information shared by all $n$ components $\IR{\{1\}...\{n\}}(\bm X_t; V_{t'})$ -- the term at the bottom of the lattice -- is included in the sum $n$ times and as such, ``double-counted'' $n-1$ times. The first lattice expansion ($q=1$), will re-add this information.

Moving one level up the lattice, the information shared by sets of $n-1$ components $\bm X^{\{-i\}}$ is included in the sum $n-1$ times and double-counted $n-2$ times, so it is re-added in the second lattice expansion ($q=2$), on top of the redundancy from the first lattice expansion.
But each of these re-added sets contains information shared by $n-1$ components, and so, it includes the information shared by all components $\IR{\{1\}...\{n\}}(\bm X_t; V_{t'})$ $n-1$ times, double-counting it $n-2$ times for each of the $n$ sets in this re-added term, so it must be subtracted $n(n-2)$ times.

The number of evaluations of $\IR{\bm\alpha}$ grows as $\mathcal{O}(\binom{n}{q-1})$. 
Since all the redundancies involve only singleton
sources, these terms are also easy to estimate empirically, since the sampling
bias in redundancy is of the same order of magnitude as that of the MI of the
sources involved~\cite{venkatesh2023gaussian}.

Notably, using the lattice expansion does involve choosing a particular PID
measure from the multiple options
available~\cite{williams2010nonnegative,barrett2015exploration,ince2017measuring}.
It is worth noting that in practical analyses, these tend to yield
qualitatively similar results~\cite{tax2017partial,rosas2020operational}.
However, since information measures capture statistical interdependencies rather than causal mechanisms~\cite{Liardi2025SimplePS}, 
using different synergy or redundancy functions requires careful interpretation.

Having introduced the criterion, we also created an implementation in Python using the Java Information Dynamics Toolkit (JIDT)~\cite{lizier2014jidt}. The implementation of the lattice decomposition was tested against the decomposition provided by the Discrete Information Theory (DIT)~\cite{james2018dit} package. The implementation is available online~\cite{mis2024} and was used on the trajectories of collective motion, which will be presented in the following two case studies.

\section{Case studies}

We now present two case studies that illustrate the benefits of the proposed approach. For these analyses, 
we use the minimum mutual information (MMI) as the redundancy function, which is computationally efficient and was shown to be either equivalent or to behave qualitatively similarly to various other measures of redundancy in Gaussian systems \cite{barrett2015exploration}.

\subsection{Reynolds flocking model}

\noindent \textbf{Setting} \quad A quintessential example of emergence %
is the collective behaviour of animal groups in motion: flocks of birds, schools of fish, herds of herbivores and so on. 
To first of all relate the methods presented in the current paper to previous work, the 
first canonical example of emergence we study with the new estimators is the Reynolds flocking
model \cite{reynolds1987flocks}. 

One of the first models to realistically simulate the phenomenology of collective motion in flocks, schools and herds, the Reynolds model
defines a multi-agent system of ``boids" (or ``bird-oid objects") which move in a 2 or 3-dimensional space, and interact with their neighbours on a given radius $r$, following three different types of social forces:
\begin{itemize}
    \item \textit{Aggregation} ($a_1$): tendency to fly towards the centre of the flock
    \item \textit{Avoidance} ($a_2$): tendency to avoid hitting the nearest neighbour
    \item \textit{Alignment} ($a_3$): tendency to align direction with nearest neighbours on a radius $r$
\end{itemize}

The model is realistic to describe collective animal behaviour, insofar as there is a degree of momentum in their motion: forces are implemented as ``steerage" towards or away from the position or direction of others. 
Increasing the avoidance parameter ($a_2$) will encourage the boids to distance away from each other, and as such, will decrease the effect of the alignment force. Manipulating this parameter produces qualitatively very different behaviours: for low avoidance, the boids gather in a cyclone, known as milling behaviour; for high avoidance, the boids no longer fly together; but for an intermediary value, one can observe a chimeric behaviour where the conflicting tendencies between order and disorder create the adaptive and complex emergent behaviour we often see in nature. Snapshots of characteristic collective behaviour are in Fig.~\ref{fig:phenom-reynolds}.

\begin{figure*}[t]
    \centering
    \includegraphics[width=0.32\linewidth]{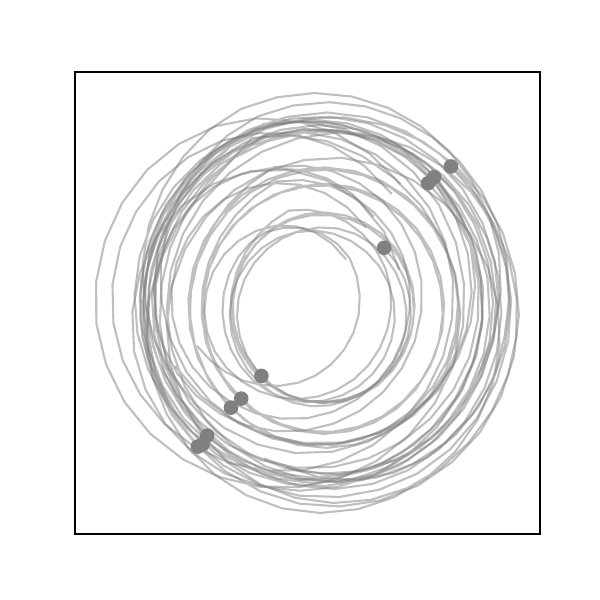}
    \includegraphics[width=0.32\linewidth]{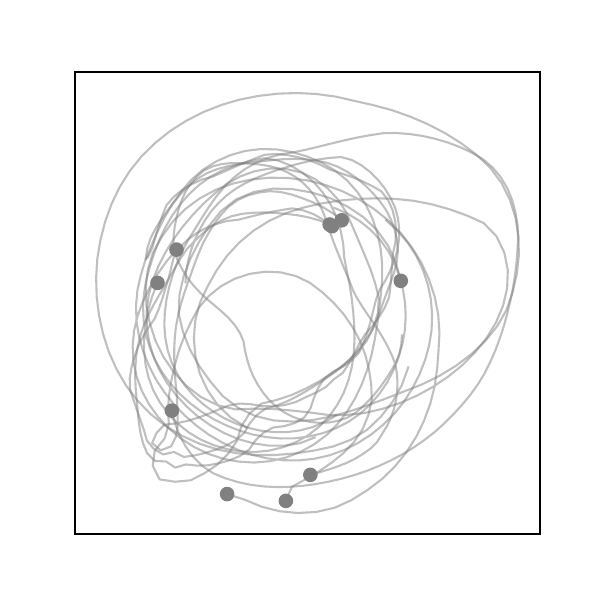}
    \includegraphics[width=0.32\linewidth]{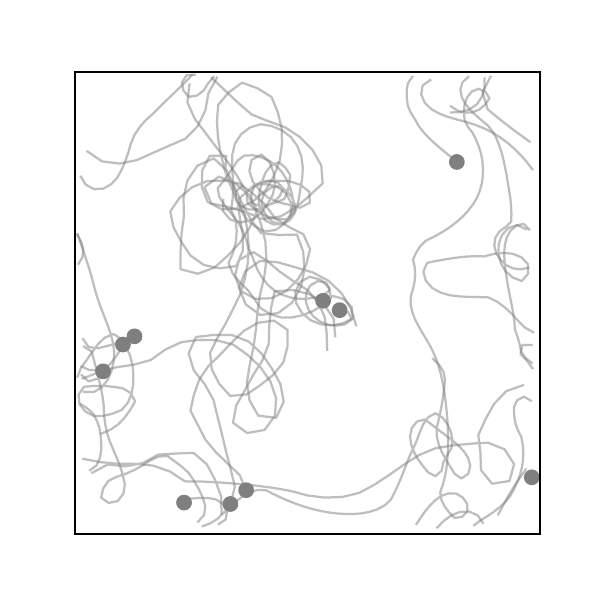}
\caption{\textbf{Phenomenonologically different instances of the Reynolds flocking model under increasing avoidance.} 
    (a) No avoidance ($a_2 = 0$): shows rigid milling behaviour, which should manifest very high redundancy. 
    (b) Intermediary avoidance ($a_2 = 0.1$): reminds of a flock of birds flying around their nest, or fish in a tank. This behaviour should be characterised by a balance of redundant and synergetic information. 
    (c) Higher avoidance ($a_2 = 0.2$): almost random behaviour, which should be characterised by small redundancy and small synergy. }
    \label{fig:phenom-reynolds}
\end{figure*}

\vspace{1em}
\noindent \textbf{Methods} \quad
For the sake of reproducibility, the simulations were configured with exactly the same parameters and random seeds as in~\cite{rosas2020reconciling}: $N = 10$ boids were simulated on a torus of side length $L = 200$, initialised with the same velocity, with random initial orientations and positions; the interaction radius was taken as $r=20$, a tenth of the total space. The velocity is the same for all boids, and the aggregation ($a_1 = 0.15$) and alignment ($a_3=0.25$) parameters were kept fixed, while the avoidance parameter was varied ($a_2 \in \{ 0.0, 0.1, 0.2 \}$) to create more or less `rigid' flocks. $R=20$ simulations were run for each parameter set for statistical robustness. 

The $\Psi$ measure was computed using the individual trajectories as the micro components of the system, while the group's centre of mass was used as a macro candidate property for an emergent feature. The time difference between the two time series was set as $t'-t=1$.

Before estimating the three quantities and applying the emergence criteria to the data, the individual positions were preprocessed by computing the distance from the centre of mass and first-order differentiating the resulting time series, following~\cite{seth2010measuring,rosas2020reconciling}.
This method reduces the underlying shared information between the velocities, as well as the positions, which are constrained within finite spatial coordinates. Calculating the distance to the centre of mass removes some of the shared information inherent in the positions. 
This preprocessing also removes some difficulties of estimating probability distributions of positions in a space with periodic boundaries (as seen, for example, in the snapshot in Fig.~\ref{fig:phenom-reynolds} (c)), which require circular statistics. Redundancy was computed by the Minimum mutual information (MMI) \cite{barrett2015exploration}.

As before, we hypothesised that the intermediary configuration ($a_2=0.1$) manifests the highest $\Psi$ due to its high synergy, but also that the lowest avoidance configuration ($a_2=0$) will be characterised by higher redundancy than the others, due to the similarity in the motion of boids. 
While this point was suggested in previous work, namely that the low avoidance scenario is dominated not by a reduction in synergy, but by an increase in redundancy, which effectively increases the synergy threshold needed for a positive $\Psi$ \cite{rosas2020reconciling},
we can now observe this directly in our results.
The higher the redundancy, the higher the threshold for synergy.

\begin{figure*}[ht]
\begin{minipage}[t]{0.48\linewidth}
    \centering
    \includegraphics[width=0.88\linewidth]{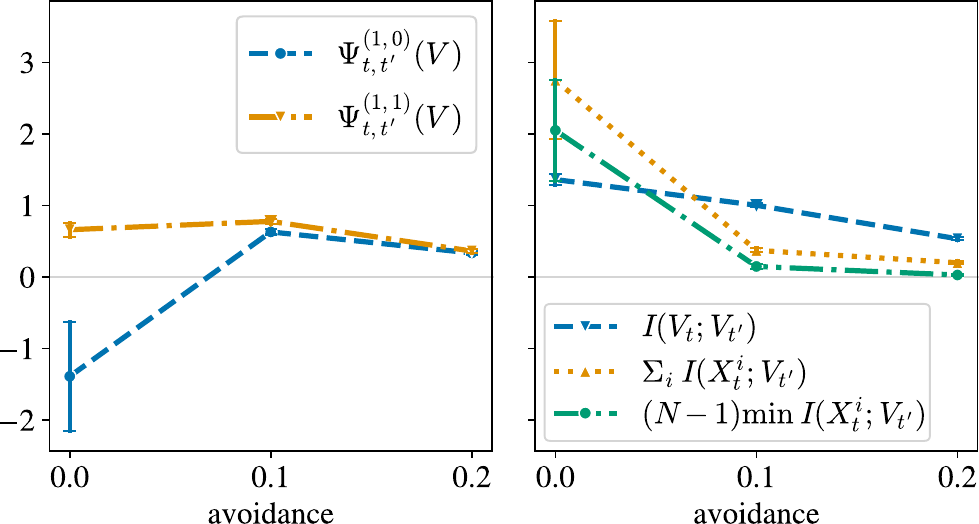}
    \caption{\textbf{ Causal emergence and first-order redundancy in the flocking boids model.}
$\Psi^{(1,0)}$ peaks in the intermediary state, and is negative in the low avoidance scenario. But when considering the first-order redundancy in the system in computing $\Psi^{(1,1)}$, the criterion also correctly finds the no-avoidance, redundancy-driven milling movement as emergent. Error bars show standard error across $R=20$ simulations for each parameter set.}
    \label{fig:results-reynolds}
\end{minipage}
\hspace{0.5cm}
\begin{minipage}[t]{0.48\linewidth}
    \centering
    \includegraphics[width=\linewidth]{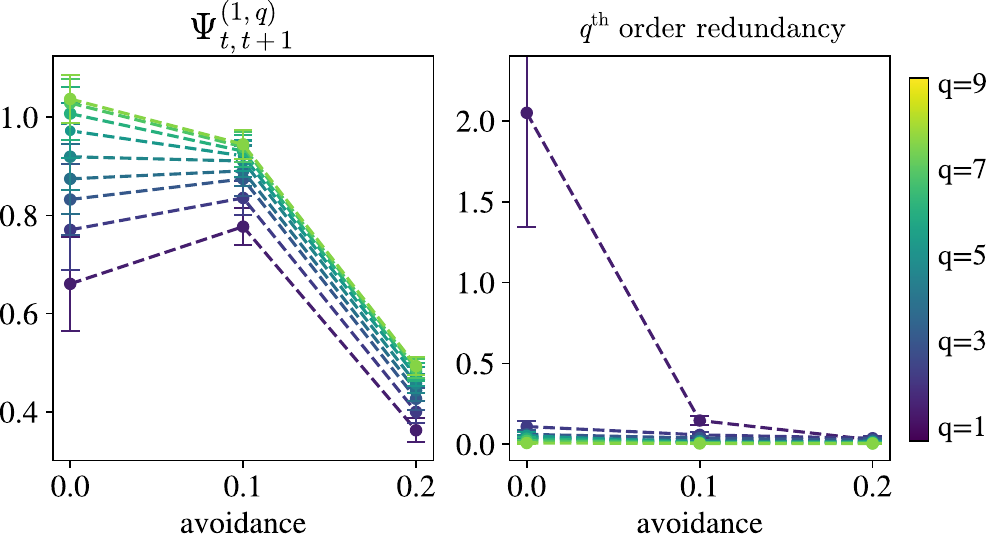}
    \caption{\textbf{ Causal emergence and redundancy in the flocking boids model.}
    Only the redundancy in the first lattice expansion is relevant in this system, being orders of magnitude while higher expansion orders ($q=2$ to $q=10$) are negligible. Error bars represent standard error across $R=20$ simulations for each parameter set.
    }
    \label{fig:results-reynolds-red}
\end{minipage}
\end{figure*}

\vspace{1em}
\noindent \textbf{Results} \quad
Having computed the amount of redundancy in the system (Fig.~\ref{fig:results-reynolds}), we find that the high avoidance case (with $\tilde\Psi^{(1,0)}=0.335, SD=0.046$) is barely affected by the first lattice expansion, while the intermediary regime is only affected slightly (with $\tilde\Psi^{(1,0)}=0.63, SD=0.078$ and a finer-grained $\tilde\Psi^{(1,1)}=0.777, SD=0.082$). The criteria would consider both these systems emergent; however, the average value of $\Psi$ is considerably smaller in the chaotic than in the intermediary regime.

However, in the no-avoidance, rigid case, which yields a negative value on average ($\tilde\Psi^{(1,0)}=-1.39, SD=1.653$), a more accurate $\Psi$ using the $q=1$ redundancy gives a positive value ($\tilde\Psi^{(1,1)}=0.66, SD=0.208$), suggesting that this system is one such case of emergence where higher redundancy hinders estimation of $\Psi$.

Most interestingly, the complete lattice expansion for $\Psi$, with $q=n-1$, reveals an unexpected result: namely a higher emergence value for the rigid case than the intermediary regime, which turns the $\Psi$ criterion from a function which peaks in the interdmediary regime to one that is monotonically decreasing with reducing the collective behaviour. Crucially, this is unlike synergy-redundancy indices in general.
However, the no-avoidance case is an example phenomenon where emergence is driven by redundancy: the full lattice expansion $\Psi^{(1,n-1)}$ may be used to detect emergence in strongly redundant systems and not just strongly synergistic ones. A more fine-grained exploration of simulation parameters, as well as different kinds of movement, should be explored before drawing a strict conclusion.

It is interesting to consider whether higher-order redundancy exists in the system, i.e. whether pairs or larger groups of boids contribute redundant information. Our results suggest that it is generally not the case (Fig.~\ref{fig:results-reynolds-red}). 
This is in accordance with the nature of the model and its phenomenology: all boids are identical and follow the same distributions, and on average interact the same way with all neighbours. In other words, the interactions between these boids and as such the information they share are homogenous, while in real-world complex systems manifesting emergence, the agents are likely to interact in more heterogeneous ways, especially when studying dynamics of certain complex collective behaviour.
As such, it is interesting to compare the results obtained in the analysis of a flocking model with real-world data of collective motion in nature.

\subsection{Schooling fish}

\begin{figure*}[th]
    \centering
    \includegraphics[width=0.96\linewidth]{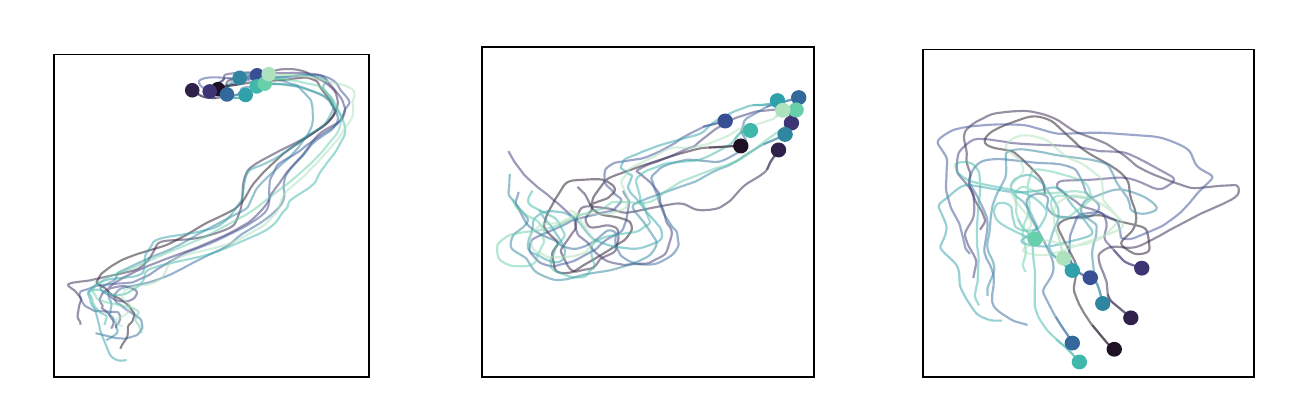}
    \caption{\textbf{Trajectories of 200 timesteps (24 seconds) of ten schooling fish from three different measurements}, showing increasing degrees of dispersion, or decreasing degrees of aggregation, while maintaining phenomenologically similar dynamical behaviour, which reminds of the intermediary regime in the Reynolds model (Fig.~\ref{fig:phenom-reynolds}).}
    \label{fig:phenom-fish}
\end{figure*}

\vspace{1em}
\noindent \textbf{Setting} \quad
\noindent As a secondary case study, we look at small schools of $ayu$ (Japanese sweetfish) swimming in a tank.
The collective behaviour shown by the schooling fish is easily visible with the naked eye (Fig.~\ref{fig:phenom-fish}). The behaviours persist for the entire time series, but the fish manifest different degrees of aggregation, from swimming very close to each other and maintaining mostly the same directions, resulting in a more rigid formation with tightly-knit trajectories, to swimming farther away and sometimes abruptly changing direction, resulting in an almost swarm-like milling at certain points.

A known problem in collective behaviour research is that, while present in the case of mathematical modelling, the mechanisms of neighbour interactions are notoriously difficult to estimate and ascertain in real data. As such, we cannot rely on any knowledge of micro interactions to classify the school's behaviour into a specific regime or phase. 

Nevertheless, research in swarm behaviour has shown that swarming, schooling and flocking often emerge thanks to the conflict between individual and collective tendencies, which could be conceptually linked to the intermediary regime in the Reynolds model. Moreover, one can observe the similarity in phenomenology between the real-world data and the intermediary regime of the simulation with the naked eye, allowing us to hypothesise that the two systems, both of size $N=10$, would show similar values for the emergence criteria.

\vspace{1em}
\noindent \textbf{Methods} \quad
The trajectory data for the schooling fish was obtained experimentally by Niizato et al.~\cite{niizato2024fish}. $N=10$ fish were placed in a 33m$^2$ tank of 15cm deep, and filmed from above to obtain trajectories that appear two-dimensional. The fish were tracked for 8 to 12 min at 20 frames per second using computer vision, producing time series around $T=10000$ in length. From the data made publicly available by the research team, we selected the first $T=10000$ timesteps in six of their data sets. Please see the original paper for further experimental details.

We estimate the $\Psi$ measure in the same way as for the Reynolds model, by choosing the centre of mass of the school as the macro feature $V$, then pre-processing all spatial coordinates into differentiated distances from the centre of the environment. 

\begin{figure*}[th]
    \centering
    \includegraphics[width=0.8\linewidth]{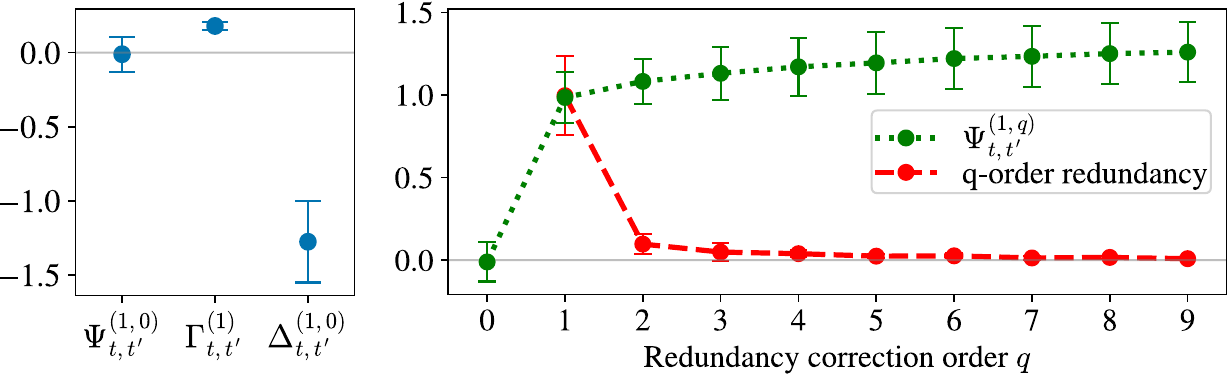}
    \caption{\textbf{Average causal emergence and redundancy in fish schools.}
    (a) Average quantities  for causal emergence ($\Psi$), causal decoupling ($\Gamma$), and downwards causation ($\Delta$) measures with $q=0$. According to the criteria in~\cite{rosas2020reconciling}, the schools manifest neither causal emergence (i.e. $\Psi > 0$) nor downwards causation (i.e. $\Delta > 0$) when double-counted shared information in the system is not accounted for.
    This is due to the high redundancy in the system, with the fish's trajectories being very similar. 
    (b) $\Psi$ and redundancy quantities computed by lattice expansion for $q \in \{0, 1...9 \}$, averaged across the six datasets. $\Psi$ is around 0 unless at least the first lattice expansion is applied, due to high redundancy in the system, corresponding to the shared information among all 10 fish. This is considerably higher than the average shared information in smaller subgroups, which supports the idea that schooling behaviour is egalitarian, with all interactions being similar to each-other.
    }
    \label{fig:results-fish-psi-avg}
\end{figure*}

\vspace{1em}
\noindent \textbf{Results} \quad
We averaged our $\Psi$ results across all six groups of fish, expecting to observe $\Psi > 0$ for all groups. 
But instead, we found that the emergence criteria with $q=1$ show schooling fish manifest neither causal emergence ($\tilde\Psi^{(1,0)}=-0.01, SD=0.14$), nor downwards causation ($\tilde\Delta^{(1,0)}=-1.28, SD=0.34$).

Our results show $\Psi$ is around 0 unless a first-order or higher lattice expansion is applied, due to the high amount of first-order redundancy in the system: namely, all sources $X_i$ at time $t$ provide at least that much information to the target $V$ at time $t+1$. 
We already see the need for the lattice expansion, as, in spite of phenomenologically emergent behaviour, $\Psi^{(1,0)}\approx0$ without re-adding redundancies, but all $q\geq 1$ quantities are clearly positive, with $\Psi^{(1,1)}=0.98, SD=0.19$ suggesting the emergence of an aggregated group around the centre of mass, and $\Delta^{(1,1)}_{t,t'}=-1.10, SD=0.25$ suggesting there is no downwards causation.

Higher-order redundancy then decreases abruptly, similar to both the high-rigidity case and the intermediary regime in the Reynolds model.  
In the school of fish, this may have a natural interpretation: very recently, it has been shown using neuroimaging that fish tend to follow a small number of neighbours when they swim (one or two)~\cite{oscar_vr_fish_2023}, but that they are very sensitive to changes in behaviour on their perception radius which cause very quick behavioural cascades, opening up the possibility of spontaneous coordinated behaviour amongst all of them as opposed to within smaller sub-groups~\cite{mugica_scale-free_2022}. 

The fine-grained estimation of redundancy across groups of components of various sizes, or, in other words, the decay in redundancy for increasing  $q$,  opens up possible new directions in the quantification and understanding of the elusive social interactions between animals, which give rise to the marvels of swarm intelligence seen in flocking, schooling and herding behaviour.

\section{Discussion}

\noindent In this work, we introduced a finer-grained decomposition of Shannon-invariant estimators of emergence that allows us to explicitly account for redundant information, and thereby refine the effective synergy threshold required to detect emergence. By expressing the emergence criterion in terms of contributions from different orders of redundancy, our approach can systematically correct for double-counted shared information and thus reduce false negatives often encountered in highly ordered systems. 

Conceptually, a redundancy correction based on the lattice expansion of order $q$ captures the information (here, quantified using minimum mutual information) that is shared by groups of $n-q+1$ sources about a given target. Applying this framework to Reynolds flocking dynamics, using the centre of mass as a macroscopic coarse-grained variable, we confirmed the intuition that strongly ordered configurations with low avoidance exhibit the largest redundancy. Incorporating a first-order redundancy correction was found to be enough to reveal causal emergence in this highly ordered regime, where the uncorrected estimator can otherwise fail to detect it.

We then applied the same methodology to empirical trajectory data from schooling fish. In these experiments, uncorrected emergence metrics fail to identify causal emergence. However, once first-order redundancy is taken into account, causal emergence becomes evident. Moreover, examining the decay of higher-order redundancy reveals additional structure in how information is shared among individuals, hinting at non-trivial patterns of interaction that go beyond simple pairwise organisation.

Taken together, these results demonstrate that redundancy-aware refinements of emergence measures can substantially improve sensitivity in both simulated and real-world systems. Beyond detecting emergence, the detailed profile of redundancy across interaction orders provides a promising avenue for probing the organisation and information-sharing architecture underlying collective behaviour in natural and artificial systems. 
These new tools open various doors for the investigation of large complex systems which display both redundancy and emergence in different places for different purposes --- collective intelligence, swarm behaviour, and last, but not least, the human brain~\cite{luppi2020synergistic,luppi2024information}.

\section*{Acknowledgements}

\noindent F.R. was supported by the UK ARIA Safeguarded AI programme, the PIBBSS
Affiliateship programme, and Open Philanthropy. H.R. is supported by the Eric and Wendy Schmidt AI for Science Fellowship. 
D.B. is funded by the Wellcome Trust (grant no. 210920/Z/18/Z).
H.J.J. and H.R. were also supported by the Statistical Physics of Cognition Project funded by the EPSRC (Grant No. EP/W024020/1).

\bibliographystyle{IEEEtran}
\bibliography{main}

\clearpage

\appendix

\section*{Appendix: Proofs}

We first present the coarse-grained PID redundancy lattice introduced by ~\cite{rosas2020reconciling} which, instead of decomposing the mutual information down to each individual source variable, is limited to collections of variables larger than a given size $k$, by using the notion of $k$\textsuperscript{th}-order synergy:
\begin{equation}\label{eqn:synk}
  \textnormal{\texttt{Syn}}^{(k)}(\bm X ; Y)\coloneqq \sum_{\bm\alpha\in\mathcal{S}^{(k)}} I_\partial^{\bm\alpha}(\bm X ; Y)
\end{equation}
where  $I_\partial^{\bm\alpha}(\bm X ; Y)$ represents the information that the set (or set of sets) of variables $\bm\alpha$ provide redundantly, but which is not provided by any smaller subsets. For example, $I_\partial^{\bm\alpha}(\bm X ; Y)$ corresponds to  $\{\{12\}\}$ if $\alpha=\{1,2\}$.

Intuitively, $\texttt{Syn}^{(k)}(\bm X; Y)$ corresponds to the information about the target that is provided by the
whole $\bm X$ but is not contained in any set of $k$ or less parts when considered separately. 
For example, for $n=2$ and $k=1$, we obtain the standard synergy $\mathcal{S}^{(1)} = \{ \{12\} \}$, 
and for $n=3$ and $k=1$ we have $\splitatcommas{\mathcal{S}^{(1)} = \{ \{12\}, \{13\}, \{23\}, \{12\}\{13\}, \{12\}\{23\}, \{13\}\{23\}, \{12\}\{13\}\{23\}, \{123\}\}}$.

The goal of this paper was to define a family of measures $\Psi^{(k,q)}$, where $q
= 0, \dots, n-1$ is the approximation order, such that $\Psi^{(k,0)} =
\Psi^{(k)}$ and $\Psi^{(1,n-1)} = \texttt{Syn}^{(1)}$.\footnote{The lattice expansion can be applied with $k > 1$, although in this case $\Psi^{(k,n-1)} \neq \texttt{Syn}^{(k)}$.}

To make the following proofs simpler, it's useful to visualise the involved
information-theoretic quantities as decomposed in a ``PID basis'' -- i.e. as
coefficients multiplying some ``basis vectors'' that are the PID atoms. In
general, any decomposable information-theoretic functional $F$ can be written as
\begin{align}
    F(\bm X; Y) = \sum_{\bm\alpha\in\mathcal{A}} a_{\bm\alpha} \PI{\bm\alpha}(\bm X; Y) ~ ,
\end{align}
with a given set of coefficients $a_{\bm\alpha}$. For example, in the case of $\Psi^{(1)}$ for $n = 3$, we have
\begin{align*}
  \def\hsep{1.5cm}
  a_{\bm\alpha} =
  \begin{cases} 
    \makebox[\hsep][l]{\hphantom{-}1} \text{if} \; \bm\alpha\in\mathcal{S}^{(1)} \\
    \makebox[\hsep][l]{-1} \text{if} \; \bm\alpha = \{i\}\{j\}, \; \forall i,j \\
    \makebox[\hsep][l]{-2} \text{if} \; \bm\alpha = \{1\}\{2\}\{3\} \\
    \makebox[\hsep][l]{\hphantom{-}0} \text{otherwise}
  \end{cases}
\end{align*}

Next, let us define $\mathcal{M}^n$ to be the set of antichains (i.e. subsets of the lattice made of elements which cannot be compared) that contain only singletons (i.e. sets with a single element). For example,
\begin{align*}
  \mathcal{M}^2\!&=\!\{ \{1\}, \{2\}, \{1\}\{2\} \} ~ ,\\
  \mathcal{M}^3\!&=\!\{ \{1\}, \{2\}, \{3\}, \{1\}\{2\}, \{1\}\{3\}, \{2\}\{3\}, \{1\}\{2\}\{3\} \} ~ .
\end{align*}
Within $\mathcal{M}^n$, the partial ordering of the redundancy lattice reduces to
\begin{align}
  \bm\alpha \preceq \bm\beta \Longleftrightarrow \bm\beta \subseteq \bm\alpha
\end{align}

The pair $(\mathcal{M}^n, \preceq)$ is a semilattice (where the greatest lower
bound always exists, but the least upper bound doesn't), which is a subset of
$\mathcal{A}^n$.

The following lemma will be instrumental in formulating the lattice expansion:

\begin{lemma}\label{lemma:singletons}

Consider two antichains $\bm\alpha, \bm\beta \in \mathcal{M}^n$ with
$|\bm\alpha| = |\bm\beta| < n$, and let $\bm\gamma = \bm\alpha \meet \bm\beta$,
and $F = \IR{\bm\alpha} + \IR{\bm\beta}$. Then, the following properties hold:

\begin{enumerate}

  \item $\bm\gamma \in \mathcal{M}^n$, and $\bm\gamma = \bm\alpha \bigcup
  \bm\beta$.

  \item The PID coefficients $a_{\bm\sigma}$ for $F$ satisfy
  \begin{align*}
    \def\hsep{1.5cm}
    a_{\bm\sigma} = 
    \begin{cases} 
      \makebox[\hsep][l]{1} \text{if} \; \bm\gamma \prec \bm\sigma \preceq \bm\alpha,\bm\beta \\
      \makebox[\hsep][l]{2} \text{if} \; \bm\sigma \preceq \bm\gamma \\
      \makebox[\hsep][l]{0} \text{otherwise}
    \end{cases}
  \end{align*}
\end{enumerate}
\end{lemma}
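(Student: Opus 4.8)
The plan is to work entirely in the ``PID basis'' introduced above, so that both claims reduce to elementary statements about the order $\preceq$ restricted to singleton antichains. The first claim is a purely order-theoretic identification of the meet, and the second then follows by writing each redundancy as an indicator sum over the lattice and adding the two, using the first claim to collapse the overlap.

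For the first claim I would note that, since $\bm\alpha$ and $\bm\beta$ consist only of singletons, which are pairwise incomparable, their union $\bm\alpha \cup \bm\beta$ is again an antichain of singletons, hence an element of $\mathcal{M}^n$. Under the reduced order $\bm\mu \preceq \bm\nu \Leftrightarrow \bm\nu \subseteq \bm\mu$, the set $\bm\alpha \cup \bm\beta$ is the smallest superset of both $\bm\alpha$ and $\bm\beta$, i.e.\ the $\preceq$-largest element lying below both, so it is their greatest lower bound within $\mathcal{M}^n$. The step that needs care is to check that this coincides with the meet $\bm\alpha \meet \bm\beta$ taken in the full lattice $\mathcal{A}$: I would show that any common lower bound $\bm\delta \preceq \bm\alpha, \bm\beta$ in $\mathcal{A}$ is forced to contain the singleton $\{i\}$ for every index $i$ appearing in $\bm\alpha$ or $\bm\beta$ (because $\bm\delta \preceq \bm\alpha$ requires, for each $\{i\} \in \bm\alpha$, some $d \in \bm\delta$ with $d \subseteq \{i\}$, which can only be $\{i\}$ itself), whence $\bm\delta \preceq \bm\alpha \cup \bm\beta$. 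This establishes $\bm\gamma = \bm\alpha \meet \bm\beta = \bm\alpha \cup \bm\beta \in \mathcal{M}^n$.

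For the second claim I would use Eq.~\eqref{eq:redpartial} to write \IR{\bm\alpha} in the PID basis as the indicator $a^{\bm\alpha}_{\bm\sigma} = \mathbf{1}[\bm\sigma \preceq \bm\alpha]$, and likewise for $\bm\beta$, so that $F = \IR{\bm\alpha} + \IR{\bm\beta}$ has coefficients $a_{\bm\sigma} = \mathbf{1}[\bm\sigma \preceq \bm\alpha] + \mathbf{1}[\bm\sigma \preceq \bm\beta] \in \{0,1,2\}$. The value is $2$ exactly when $\bm\sigma$ is a common lower bound of $\bm\alpha$ and $\bm\beta$; by the defining property of the meet together with the first claim, the common lower bounds are precisely the $\bm\sigma$ with $\bm\sigma \preceq \bm\gamma$, which gives the second case. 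The remaining nonzero value $1$ occurs exactly when $\bm\sigma$ lies below one of $\bm\alpha, \bm\beta$ but not below $\bm\gamma$ (equivalently, below one but not both), which is the region written $\bm\gamma \prec \bm\sigma \preceq \bm\alpha, \bm\beta$; everything else receives $0$. The bookkeeping here is just the integer identity $x + y = (x \vee y) + (x \wedge y)$ applied to the two indicators.

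I expect the main obstacle to be the full-lattice step in the first claim: the order $\preceq$ on $\mathcal{A}$ is defined through existential containment between arbitrary antichains, so one must argue carefully that testing against a singleton-only antichain collapses to the clean membership condition ``$\{i\} \in \bm\sigma$ for all relevant $i$'', and that no common lower bound built from larger sets can sit strictly above $\bm\gamma$. This is also precisely what pins down the coefficient-$1$ region correctly for all $\bm\sigma \in \mathcal{A}$, rather than only for $\bm\sigma \in \mathcal{M}^n$. The hypothesis $|\bm\alpha| = |\bm\beta| < n$ enters mainly to guarantee that, when $\bm\alpha \neq \bm\beta$, the two antichains are incomparable (neither below the other), so that $\bm\gamma$ is a proper meet strictly below each and the three-way split into coefficients $0, 1, 2$ is genuine; the degenerate case $\bm\alpha = \bm\beta$ can be dispatched separately.
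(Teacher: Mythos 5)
Your proof is correct and takes essentially the same route as the paper's: for part 1 you use the same observation that $d\subseteq\{i\}$ forces $d=\{i\}$ and then the greatest-lower-bound characterisation (you show $\bm\alpha\cup\bm\beta$ is a common lower bound dominating every other; the paper instead rules out a strictly larger meet by contradiction), and for part 2 you make explicit the indicator-sum reading of Eq.~\eqref{eq:redpartial} that the paper compresses into ``follows directly from the definition.'' Your charitable reading of the coefficient-$1$ case as ``$\bm\sigma$ lies below exactly one of $\bm\alpha,\bm\beta$'' is the right one --- it is what makes the statement true, even though that set is strictly larger than the literal interval $\{\bm\sigma:\bm\gamma\prec\bm\sigma\}$ (e.g.\ for $\bm\alpha=\{1\}$, $\bm\beta=\{2\}$ the atom $\{1\}\{3\}$ receives coefficient $1$ yet is incomparable to $\bm\gamma=\{1\}\{2\}$), so the imprecision lies in the lemma's notation rather than in your argument.
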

\begin{proof}
For convenience, recall the definition of the partial order in the redundancy
lattice~\cite{williams2010nonnegative}:
\begin{align*}
  \bm\alpha \preceq \bm\beta \Longleftrightarrow \forall b \in \bm\beta, \exists a\in\bm\alpha, a \subseteq b ~ .
\end{align*}
For the first property, we begin by proving that $\bm\alpha\bigcup\bm\beta
\subseteq \bm\gamma$. For this, note that since all $a \in \bm\alpha$ has $|a|
= 1$, the $\subseteq$ in the expression above has to be an equality (since empty
sets are not allowed~\cite{williams2010nonnegative}). Therefore, since
$\bm\gamma \preceq \bm\alpha,\bm\beta$ we have that $\forall
a\in\bm\alpha,\bm\beta, a\in\bm\gamma$; and thus $\bm\alpha\bigcup\bm\beta
\subseteq \bm\gamma$. The equality can be proven using the fact that
$\bm\gamma$ must be the greatest lower bound of $\bm\alpha$ and $\bm\beta$: if
it was the case that $\bm\alpha\bigcup\bm\beta \subset \bm\gamma$, then there
would exist $\bm\sigma = \bm\alpha\bigcup\bm\beta$ such that $\bm\gamma \prec
\bm\sigma \preceq \bm\alpha,\bm\beta$, resulting in a contradiction.

The second property follows directly from the definition of $\bm\gamma$ and
Eq.~\eqref{eq:redpartial}.

\end{proof}

This guarantees that if you take a sum
like $\sum_{{\bm\alpha\in\mathcal{M}^n,\\|\bm\alpha| = r}}$, then all
atoms down to the next level ($\bm\beta\in\mathcal{M}^n, |\bm\beta| = r+1$)
have the same coefficient.

In turn, this means that if we want to guarantee
that $a_{\bm\alpha} = c$ for all $\bm\alpha\in\mathcal{A}^n$, it suffices to
guarantee that $a_{\bm\beta} = c$ for all $\bm\beta\in\mathcal{M}^n$. 
Now we use Lemma~\ref{lemma:singletons} to prove a useful property of
$\Psi^{(1)}$.

\begin{lemma}\label{lemma:psi}
The PID coefficients $a_{\bm\alpha}$ for $\Psi^{(1)}$ satisfy $a_{\bm\alpha} =
1-g(\bm\alpha)$, where $g(\bm\alpha)$ is the number of singleton elements in
$\bm\alpha$.
\end{lemma}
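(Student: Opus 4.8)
The plan is to compute the PID coefficients of $\Psi^{(1)}$ directly from its definition by expanding every mutual-information term in the redundancy basis. Recalling Eq.~\eqref{eq:psi} with $k=1$ gives $\Psi^{(1)} = I(\bm X; Y) - \sum_{i=1}^n I(X_i; Y)$. The first ingredient is to rewrite each mutual information as a cumulative redundancy, using the standard PID consistency conditions: the full mutual information is the redundancy at the top of the lattice, $I(\bm X; Y) = \IR{\{[n]\}}(\bm X; Y)$, while each single-source term is the redundancy at the corresponding singleton atom, $I(X_i; Y) = \IR{\{\{i\}\}}(\bm X; Y)$. Substituting these and applying Eq.~\eqref{eq:redpartial} turns $\Psi^{(1)}$ into a signed sum of partial-information atoms, from which the coefficients $a_{\bm\alpha}$ can be read off.

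First I would handle the positive term. Since $\{[n]\}$ is the maximal element of the lattice, every atom $\bm\beta$ satisfies $\bm\beta\preceq\{[n]\}$, so Eq.~\eqref{eq:redpartial} gives $\IR{\{[n]\}} = \sum_{\bm\beta}\PI{\bm\beta}$ and hence a contribution of $+1$ to the coefficient of every atom. Next I would identify, for each $i$, exactly which atoms sit below the singleton atom $\{\{i\}\}$. Using the partial order $\bm\beta\preceq\bm\gamma \Leftrightarrow \forall c\in\bm\gamma,\, \exists b\in\bm\beta,\, b\subseteq c$ recalled in the proof of Lemma~\ref{lemma:singletons}, the condition $\bm\beta\preceq\{\{i\}\}$ requires some $b\in\bm\beta$ with $b\subseteq\{i\}$; because $\{i\}$ is a singleton and empty sets are disallowed, this forces $\{i\}\in\bm\beta$. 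Thus $\IR{\{\{i\}\}} = \sum_{\bm\beta:\,\{i\}\in\bm\beta}\PI{\bm\beta}$, contributing $-1$ to the coefficient of precisely those atoms containing the singleton $\{i\}$.

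Summing the negative terms over $i$, an atom $\bm\beta$ receives $-1$ once for each singleton $\{i\}$ it contains, i.e.\ a total negative contribution of $-g(\bm\beta)$. Combining with the uniform $+1$ from the positive term yields $a_{\bm\beta} = 1 - g(\bm\beta)$, as claimed. As a consistency check I would verify this against the displayed $n=3$ example: atoms in $\mathcal{S}^{(1)}$ have no singletons ($g=0$, coefficient $1$), the pairwise atoms $\{i\}\{j\}$ have $g=2$ (coefficient $-1$), the bottom atom $\{1\}\{2\}\{3\}$ has $g=3$ (coefficient $-2$), and every remaining atom has exactly one singleton ($g=1$, coefficient $0$), matching the stated case table.

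The only delicate point is the down-set computation of the singleton atom in the second paragraph: one must argue carefully that $\bm\beta\preceq\{\{i\}\}$ is equivalent to the membership $\{i\}\in\bm\beta$ rather than to some weaker containment, which hinges on $\{i\}$ being a singleton so that its only nonempty subset is itself. Everything else is bookkeeping over the lattice. An alternative route would build $\sum_i\IR{\{\{i\}\}}$ incrementally via Lemma~\ref{lemma:singletons}, accumulating meets $\bm\alpha\meet\bm\beta$ and their doubled coefficients, but the direct down-set argument is shorter and avoids tracking the intermediate antichains.
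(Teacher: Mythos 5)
Your proof is correct and follows essentially the same route as the paper's: the uniform $+1$ from the whole-system term at the top of the lattice, and the observation that the down-set of a singleton atom $\{\{i\}\}$ consists exactly of the antichains containing $\{i\}$, so each atom is subtracted once per singleton it contains. Your version simply spells out the down-set computation (and the $n=3$ consistency check) in more detail than the paper does, which is a welcome rather than a substantive difference.
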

\begin{proof}
The $+1$ term comes from the $I(V_t; V_{t'})$ in $\Psi$, which includes all
nodes in the lattice. In addition, any given $\bm\alpha$ may be affected by the
negative $I(X_t^i; V_{t'}) = \IR{\{i\}}(\bX; V_{t'})$ in $\Psi$. Following a
logic similar to Lemma~\ref{lemma:singletons}, $\bm\alpha \preceq \{i\}$ iff
$\{i\} \in \bm\alpha$. Since the sum in $\Psi^{(1)}$ runs for all individual
sources, any particular atom $\bm\alpha$ will be subtracted once for each
$\{i\}$ it precedes -- or, equivalently, the number of singleton elements in
$\bm\alpha$.
\end{proof}

The coefficients allow us to formulate the lattice expansion for the $\Psi$ emergence criterion introduced in Eq.~\ref{eq:expansion}, which we repeat here for convenience:
\begin{align*}
  \Psi^{(k,q)} = \Psi^{(k)} + \sum_{r=n-q+1}^n \sum_{\substack{\bm\alpha\in\mathcal{M}^n\\|\bm\alpha| = r}} C^n_{q,r} \IR{\bm\alpha}(\bm X_t; V_{t'}) ~ ,
\end{align*}
with the coefficients $C^n_{q,r}$ defined recursively as
\begin{align*}
  C^n_{q,r} &= r-1 - \sum_{s=n-q+1}^{r-1} C^n_{q,s} \binom{r}{s} \\
  C^n_{q,n-q+1} &= n-q,
\end{align*}
where $n$ is the system size, $q \in [1, n-1]$ the order of the expansion,
$r \in [n-q+1, n]$ the cardinality of the set of sources being considered, 
and $\binom{r}{s}$ the binomial operator.

\begin{lemma}\label{lemma:expansion}
The quantity $\Psi^{(k,q)}$ in Eq.~\eqref{eq:expansion} is a valid criterion
for emergence, it is greater than $\Psi^{(k)}$, and for the $k = 1$ case the
full expansion for $q = n-1$ satisfies $\Psi^{(1,n-1)} =
\textnormal{\texttt{Syn}}^{(1)}$.
\end{lemma}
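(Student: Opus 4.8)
The plan is to carry out the whole argument in the ``PID basis'', writing each functional as $F = \sum_{\bm\alpha\in\mathcal{A}^n} a_{\bm\alpha}\,\PI{\bm\alpha}$ and reducing the three claims to statements about the coefficients $a_{\bm\alpha}$. First I would record the base coefficients: Lemma~\ref{lemma:psi} gives $a_{\bm\alpha}=1-g(\bm\alpha)$ for $\Psi^{(1)}$, and the same bookkeeping applied to Eq.~\eqref{eq:psi}---using $I(\textbf X_t^{\beta};V_{t'})=\IR{\{\beta\}}$ together with Eq.~\eqref{eq:redpartial}---gives $a_{\bm\alpha}=1-h_k(\bm\alpha)$ for general $\Psi^{(k)}$, where $h_k(\bm\alpha)$ counts the size-$k$ index sets $\beta$ admitting some element of $\bm\alpha$ contained in $\beta$ (so $h_1=g$). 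The key simplification is that the correction $\sum_{r}\sum_{\bm\alpha} C^n_{q,r}\IR{\bm\alpha}$ in Eq.~\eqref{eq:expansion} does not depend on $k$, so $\Psi^{(k,q)}-\Psi^{(k)}$ has one and the same PID-basis expansion for every $k$.

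Next I would compute the coefficient the correction contributes to an arbitrary atom $\bm\sigma$. Because every $\bm\alpha$ in that sum lies in $\mathcal{M}^n$, the partial order gives $\bm\sigma\preceq\bm\alpha$ iff every element of $\bm\alpha$ is a singleton of $\bm\sigma$; hence exactly $\binom{g(\bm\sigma)}{r}$ of the size-$r$ antichains in $\mathcal{M}^n$ lie above $\bm\sigma$, and Lemma~\ref{lemma:singletons} guarantees the correction acts uniformly across each lattice level. The correction coefficient at $\bm\sigma$ is therefore $\sum_{r=n-q+1}^{g(\bm\sigma)}C^n_{q,r}\binom{g(\bm\sigma)}{r}$. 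The recursion defining $C^n_{q,r}$ is precisely the identity $\sum_{s=n-q+1}^{r}C^n_{q,s}\binom{r}{s}=r-1$, so this sum telescopes to $g(\bm\sigma)-1$ when $g(\bm\sigma)\ge n-q+1$ and vanishes otherwise.

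With all coefficients explicit the claims follow. The correction coefficient is $g(\bm\sigma)-1\ge 1$ on the atoms it touches and $0$ elsewhere, and the touched set only grows as $q$ increases; since PID atoms are non-negative this gives both $\Psi^{(k,q)}\ge\Psi^{(k)}$ and monotonicity in $q$. For the $k=1$, $q=n-1$ identity one has $n-q+1=2$, so adding $1-g(\bm\sigma)$ and the correction yields total coefficient $0$ for every $g(\bm\sigma)\ge 1$ and $1$ for $g(\bm\sigma)=0$; the latter atoms are exactly $\mathcal{S}^{(1)}$, whence $\Psi^{(1,n-1)}=\sum_{\bm\sigma\in\mathcal{S}^{(1)}}\PI{\bm\sigma}=\texttt{Syn}^{(1)}$. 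For validity I would invoke monotonicity to reduce to $q=n-1$ and show $\Psi^{(k,n-1)}\le\texttt{Syn}^{(k)}$: an atom of $\mathcal{S}^{(k)}$ has $g=0$ and $h_k=0$ (no size-$k$ set contains a block of size $>k$), giving coefficient $1$, and it remains to check every other atom gets coefficient $\le 0$.

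The step I expect to be the crux is this last inequality for general $k$: I must show $h_k(\bm\sigma)\ge\max\{g(\bm\sigma),1\}$ for all $\bm\sigma\notin\mathcal{S}^{(k)}$, since then the coefficient $1-h_k(\bm\sigma)+\max\{g(\bm\sigma)-1,0\}$ is $\le 0$. Counting the size-$k$ sets that meet the singleton indices of $\bm\sigma$ gives $\binom{n}{k}-\binom{n-g}{k}$, and I would establish $\binom{n}{k}-\binom{n-g}{k}\ge g$ for $1\le k\le n-1$ via the Pascal identity $\binom{n}{k}-\binom{n-g}{k}=\sum_{j=1}^{g}\binom{n-j}{k-1}$: each of the first $\min\{g,n-k+1\}$ summands is at least $1$, and once the summands vanish the total has already reached $\binom{n}{k}\ge n\ge g$. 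The $k=1$ case is trivial ($h_1=g$), which is exactly why the clean endpoint identity holds only there; for $k>1$ the surviving strictly negative atoms are what force $\Psi^{(k,n-1)}\ne\texttt{Syn}^{(k)}$, in agreement with the footnote.
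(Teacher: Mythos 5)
Your proof is correct and reaches the paper's conclusions by the same overall strategy --- bookkeeping of PID-basis coefficients built on Lemmas~\ref{lemma:singletons} and~\ref{lemma:psi} --- but the execution is genuinely different in two places, and in both your version is the more complete one. Where the paper constructs the $C^n_{q,r}$ top-down, correcting one lattice level at a time and reading the recursion off the count $\binom{r}{s}$ of lower-cardinality antichains preceded by a cardinality-$r$ one, you observe that the recursion is literally the identity $\sum_{s=n-q+1}^{r}C^n_{q,s}\binom{r}{s}=r-1$ and use it to express the \emph{net} correction at an arbitrary atom $\bm\sigma$ in closed form: $g(\bm\sigma)-1$ if $g(\bm\sigma)\ge n-q+1$, and $0$ otherwise. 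That single formula yields $\Psi^{(k,q)}\ge\Psi^{(k)}$, the endpoint identity $\Psi^{(1,n-1)}=\texttt{Syn}^{(1)}$, and (as a bonus the paper does not state) monotonicity in $q$. Second, for validity at general $k$ the paper only asserts that $\Psi^{(k,q)}$ has no positive coefficient outside $\mathcal{S}^{(1)}$; this is exactly the inequality $h_k(\bm\sigma)\ge g(\bm\sigma)$ for atoms with $g(\bm\sigma)\ge 1$, which you actually prove via $\binom{n}{k}-\binom{n-g}{k}=\sum_{j=1}^{g}\binom{n-j}{k-1}\ge g$, and your argument in fact gives the stronger conclusion that no positive coefficient survives outside $\mathcal{S}^{(k)}$.

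One caveat, which you share with the paper rather than introduce: both $\Psi^{(k,q)}\ge\Psi^{(k)}$ and the step from ``all positive coefficients lie in $\mathcal{S}^{(k)}$'' to ``$\Psi^{(k,q)}>0$ implies emergence'' require the atoms $\PI{\bm\sigma}$ to be non-negative, which holds for Williams--Beer redundancy but can fail for other measures (including MMI beyond two sources); this should be stated as a hypothesis rather than used silently. Likewise, the $+1$ coefficient you inherit from Lemma~\ref{lemma:psi} for the $I(V_t;V_{t'})$ term is really an upper bound via supervenience, $I(V_t;V_{t'})\le I(\bm X_t;V_{t'})$; the inequality points the right way for every claim, but the endpoint identity is exact only modulo that gap.
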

\begin{proof}
For a given $n,q$, we begin by correcting the highest level in the lattice
(i.e. the level with the lowest $r$). As per Lemma~\ref{lemma:psi}, nodes with
$r$ singletons need to be corrected by a factor $r-1$; and thus the coefficient
for the lowest $r$ in the expansion satisfies $C^n_{q,n-q+1} = n-q$.

Crucially, Lemma~\ref{lemma:singletons} guarantees that once the atoms in
$\mathcal{A}^n$ with exactly $r$ singletons have been corrected, then all atoms
between them and the atoms with $r+1$ singletons are corrected too. This allows
us to only consider atoms in $\mathcal{M}^n$ with different cardinalities, and
guarantees that all atoms in the full PID lattice will be covered.

After atoms with cardinality $c$ have been corrected, we can consider the set
of atoms in $\mathcal{M}^n$ with cardinality $c+1$. These atoms will have a
coefficient $1-r$ (as per Lemma~\ref{lemma:psi}), plus the contributions from
the terms in the expansion with lower $r$. In general, an atom with cardinality
$r$ will precede $\binom{r}{s}$ atoms with cardinality $s < r$, which need to
be accounted for in the expansion -- and thus the recursive expression for
$C^n_{q,r}$.

Finally, since, for all $q$, $\Psi^{(k,q)}$ has no positive $a_{\bm\alpha}$ for
$\bm\alpha \notin \mathcal{S}^{(1)}$, the same arguments for $\Psi^{(k)}$ being
a valid criterion for emergence apply to $\Psi^{(k,q)}$.
\end{proof}

\end{document}